\theoremstyle{plain}
\theoremstyle{definition}
\theoremstyle{remark}
\DeclareMathAlphabet\mathbfcal{OMS}{cmsy}{b}{n}
\newcommand{\mat}[1]{\mathbf{#1}}
\DeclareMathOperator{\simwritten}{sim}
\DeclareMathOperator*{\argmax}{\arg\!\max}
\title{DESSERT: An Efficient Algorithm for Vector Set Search with Vector Set Queries}
\author{%
  Joshua Engels \\
  ThirdAI\\
  \texttt{josh.adam.engels@gmail.com} 
  \And
  Benjamin Coleman \\
  ThirdAI\\
  \texttt{benjamin.ray.coleman@gmail.com} 
  \And
  Vihan Lakshman \\
  ThirdAI\\
  \texttt{vihan@thirdai.com} 
  \And
  Anshumali Shrivastava \\
  ThirdAI, Rice University\\
  \texttt{anshu@thirdai.com, anshumali@rice.edu} 
}
\begin{document}

\maketitle

\begin{abstract}
We study the problem of \emph{vector set search} with \emph{vector set queries}. This task is analogous to traditional near-neighbor search, with the exception that both the query and each element in the collection are \textit{sets} of vectors. We identify this problem as a core subroutine for semantic search applications and find that existing solutions are unacceptably slow. Towards this end, we present a new approximate search algorithm, DESSERT ({\bf D}ESSERT {\bf E}ffeciently {\bf S}earches {\bf S}ets of {\bf E}mbeddings via {\bf R}etrieval {\bf T}ables). DESSERT is a general tool with strong theoretical guarantees and excellent empirical performance. When we integrate DESSERT into ColBERT, a state-of-the-art semantic search model, we find a 2-5x speedup on the MS MARCO and LoTTE retrieval benchmarks with minimal loss in recall, underscoring the effectiveness and practical applicability of our proposal. 
\end{abstract}

\section{Introduction}

Similarity search is a fundamental driver of performance for many high-profile machine learning applications. Examples include web search~\cite{huang2013learning}, product recommendation~\cite{nigam2019semantic}, image search~\cite{johnson2019billion}, de-duplication of web indexes~\cite{manku2007detecting} and friend recommendation for social media networks~\cite{sharma2017hashes}. In this paper, we study a variation on the traditional vector search problem where the dataset $D$ consists of a collection of \textit{vector sets} $D = \{S_1, ... S_N\}$ and the query $Q$ is also a vector set. We call this problem \emph{vector set search} with \emph{vector set queries} because both the collection elements and the query are sets of vectors. Unlike traditional vector search, this problem currently lacks a satisfactory solution.

Furthermore, efficiently solving the vector set search problem has immediate practical implications. Most notably, the popular ColBERT model, a state-of-the-art neural architecture for semantic search over documents \cite{khattab2020colbert}, achieves breakthrough performance on retrieval tasks by representing each query and document as a set of BERT token embeddings. ColBERT's current implementation of vector set search over these document sets, while superior to brute force, is prohibitively slow for real-time inference applications like e-commerce that enforce strict search latencies under 20-30 milliseconds \cite{olenski_2016, arapakis2021impact}. Thus, a more efficient algorithm for searching over sets of vectors would have significant implications in making state-of-the-art semantic search methods feasible to deploy in large-scale production settings, particularly on cost-effective CPU hardware. 


Given ColBERT's success in using vector sets to represent documents more accurately, and the prevailing focus on traditional single-vector near-neighbor search in the literature \cite{andoni2008near, wang2017flash, jegou2010product, dong2019learning, guo2020accelerating, malkov2018efficient, iwasaki2018optimization}, we believe that the potential for searching over sets of representations remains largely untapped. An efficient algorithmic solution to this problem could enable new applications in domains where multi-vector representations are more suitable. To that end, we propose DESSERT, a novel randomized algorithm for efficient set vector search with vector set queries. We also provide a general theoretical framework for analyzing DESSERT and evaluate its performance on standard passage ranking benchmarks, achieving a 2-5x speedup over an optimized ColBERT implementation on several passage retrieval tasks.







\subsection{Problem Statement}
\label{sec:formal}

More formally, we consider the following problem statement.
\begin{restatable}{definition}{problem}
\label{def:problem}
Given a collection of $N$ vector sets $D = \{S_1, ... S_N\}$, a query set $Q$, a failure probability $\delta \ge 0$, and a set-to-set relevance score function $F(Q,S)$, the \textit{Vector Set Search Problem} is the task of returning $S^*$ with probability at least $1 - \delta$:
\begin{align*}
    S^\star = \underset{i \in \{1, ... N\} }{\mathrm{argmax}} F(Q,S_i)
\end{align*}
Here, each set $S_i = \{x_1, ... x_{m_i}\}$ contains $m_i$ vectors with each $x_j \in \mathbb{R}^d$, and similarly $Q = \{q_1, ... q_{m_q}\}$ contains $m_q$ vectors with each $q_j \in \mathbb{R}^d$.
\end{restatable}

We further restrict our consideration to structured forms of $F(Q,S)$, where the relevance score consists of two ``set aggregation'' or ``variadic'' functions. The inner aggregation $\sigma$ operates on the pairwise similarities between a single vector from the query set and each vector from the target set. Because there are $|S|$ elements in $S$ over which to perform the aggregation, $\sigma$ takes $|S|$ arguments. The outer aggregation $A$ operates over the $|Q|$ scores obtained by applying $A$ to each query vector $q \in Q$. Thus, we have that
\begin{align*}
    F(Q, S) &= A(\{Inner_{q,S} : q \in Q\})\\
    Inner_{q,S} &= \sigma(\{\simwritten(q, x) : x \in S\})
\end{align*}
Here, $\simwritten$ is a vector similarity function. Because the inner aggregation is often a maximum or other non-linearity, we use $\sigma(\cdot)$ to denote it, and similarly since the outer aggregation is often a linear function we denote it with $A(\cdot)$. These structured forms for $F = A \circ \sigma$ are a good measure of set similarity when they are monotonically non-decreasing with respect to the similarity between any pair of vectors from $Q$ and $S$.

\subsection{Why is near-neighbor search insufficient?} 
\label{sec:single-insufficient}
It may at first seem that we could solve the Vector Set Search Problem by placing all of the individual vectors into a near-neighbor index, along with metadata indicating the set to which they belonged. One could then then identify high-scoring sets by finding near neighbors to each $q \in Q$ and returning their corresponding sets.

There are two problems with this approach. The first problem is that a single high-similarity interaction between $q \in Q$ and $x \in S$ does not imply that $F(Q,S)$ will be large. For a concrete example, suppose that we are dealing with sets of word embeddings and that $Q$ is a phrase where one of the items is ``keyword.'' With a standard near-neighbor index, $Q$ will match (with 100\% similarity) any set $S$ that also contains ``keyword,'' regardless of whether the other words in $S$ bear any relevance to the other words in $Q$. The second problem is that the search must be conducted over all individual vectors, leading to a search problem that is potentially very large. For example, if our sets are documents consisting of roughly a thousand words and we wish to search over a million documents, we now have to solve a billion-scale similarity search problem.

\paragraph{Contributions:}
In this work, we formulate and carefully study the set of vector search problem with the goal of developing a more scalable algorithm capable of tackling large-scale semantic retrieval problems involving sets of embeddings. Specifically, our research contributions can be summarized as follows:

\begin{enumerate}
    \item We develop the first non-trivial algorithm, DESSERT, for the vector set search problem that scales to large collections ($n > 10^6$) of sets with $m > 3$ items.
    \item We formalize the vector set search problem in a rigorous theoretical framework, and we provide strong guarantees for a common (and difficult) instantiation of the problem. 
    \item We provide an open-source C++ implementation of our proposed algorithm that has been deployed in a real-world production setting\footnote{https://github.com/ThirdAIResearch/Dessert}. Our implementation scales to hundreds of millions of vectors and is 3-5x faster than existing approximate set of vector search techniques. We also describe the implementation details and tricks we discovered to achieve these speedups and provide empirical latency and recall results on passage retrieval tasks. 
\end{enumerate}

\section{Related Work}
\label{sec:applications}

\paragraph{Near-Neighbor Search:} Near-neighbor search has received heightened interest in recent years with the advent of vector-based representation learning. In particular, considerable research has gone into developing more efficient \textit{approximate} near-neighbor (ANN) search methods that trade off an exact solution for sublinear query times. A number of ANN algorithms have been proposed, including those based on locality-sensitive hashing \cite{andoni2008near, wang2017flash}, quantization and space partition methods \cite{jegou2010product, dong2019learning, guo2020accelerating}, and graph-based methods \cite{malkov2018efficient, iwasaki2018optimization}. Among these classes of techniques, our proposed DESSERT framework aligns most closely with the locality-sensitive hashing paradigm. However, nearly all of the well-known and effective ANN methods focus on searching over individual vectors; our work studies the search problem for sets of entities. This modification changes the nature of the problem considerably, particularly with regards to the choice of similarity metrics between entities. 

\paragraph{Vector Set Search:} The general problem of vector set search has been relatively understudied in the literature. A recent work on database lineage tracking~\cite{leybovich2021efficient} addresses this precise problem, but with severe limitations. The proposed approximate algorithm designs a concatenation scheme for the vectors in a given set, and then performs approximate search over these concatenated vectors. The biggest drawback to this method is scalability, as the size of the concatenated vectors scales quadratically with the size of the vector set. This leads to increased query latency as well as substantial memory overhead; in fact, we are unable to apply the method to the datasets in this paper without terabytes of RAM. In this work, we demonstrate that DESSERT can scale to thousands of items per set with a linear increase (and a slight logarithmic overhead) in query time, which, to our knowledge, has not been previously demonstrated in the literature.

\paragraph{Document Retrieval:} In the problem of document retrieval, we receive queries and must return the relevant documents from a preindexed corpus. Early document retrieval methods treated each documents as bags of words and had at their core an inverted index \cite{manning2008introduction}. More recent methods embed each document into a single representative vector, embed the query into the same space, and performed ANN search on those vectors. These semantic methods achieve far greater accuracies than their lexical predecessors, but require similarity search instead of inverted index lookups~\cite{huang2020embedding, nigam2019semantic, li2021embedding}. 

\paragraph{ColBERT and PLAID:} ColBERT~\cite{khattab2020colbert} is a recent state of the art algorithm for document retrieval that takes a subtly different approach. Instead of generating a single vector per document, ColBERT generates a \textit{set} of vectors for each document, approximately one vector per word. To rank a query, ColBERT also embeds the query into a set of vectors, filters the indexed sets, and then performs a brute force \textit{sum of max similarities} operation between the query set and each of the document sets. ColBERT's passage ranking system is an instantiation of our framework, where $\simwritten(q, x)$ is the cosine similarity between vectors, $\sigma$ is the max operation, and $A$ is the sum operation.

In a similar spirit to our work, PLAID~\cite{santhanam2022plaid} is a recently optimized form of ColBERT that includes more efficient filtering techniques and faster quantization based set similarity kernels. However, we note that these techniques are heuristics that do not come with theoretical guarantees and do not immediately generalize to other notions of vector similarity, which is a key property of the theoretical framework behind DESSERT.

\begin{figure*}[t]
\begin{center}
\centerline{\includegraphics[width=11cm]{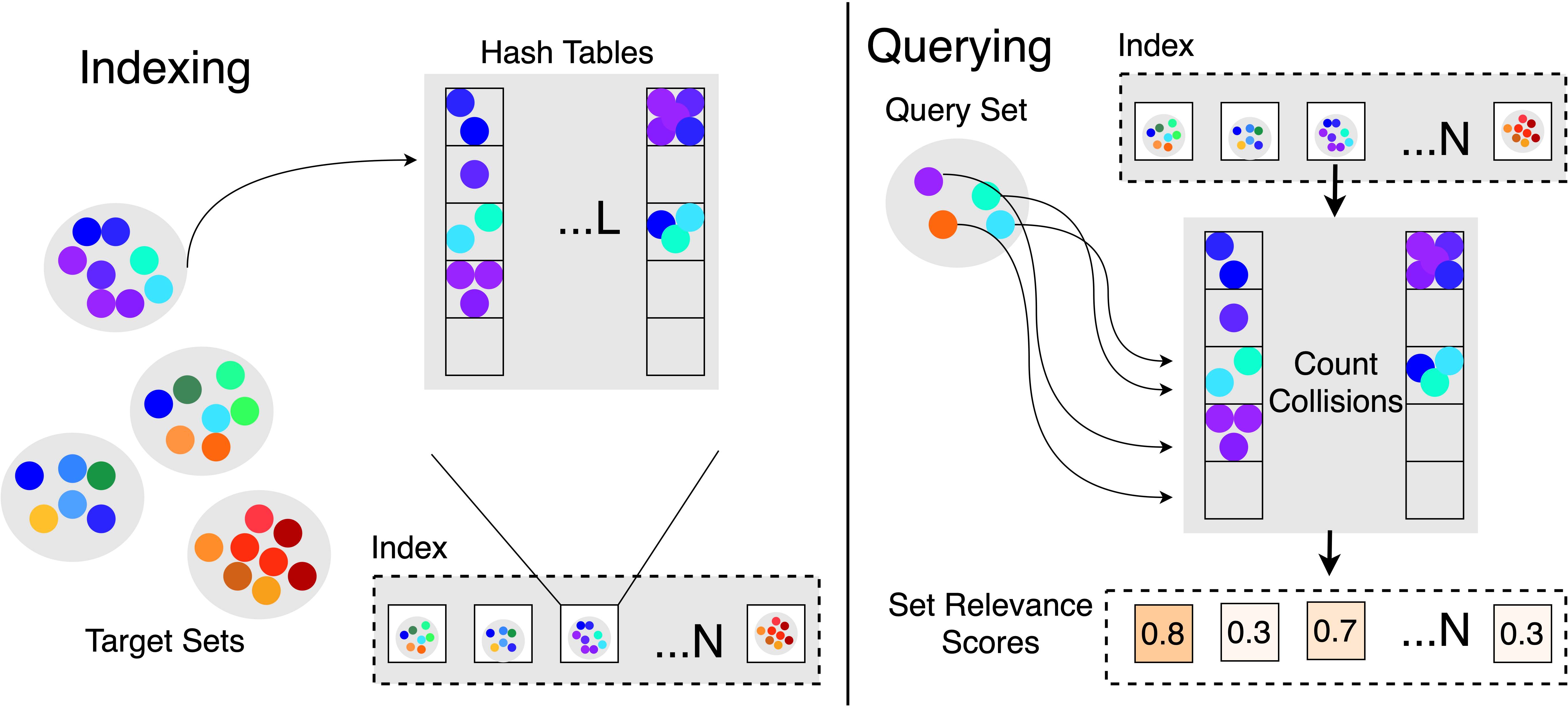}}
\caption{The DESSERT indexing and querying algorithms. During indexing (left), we represent each target set as a set of hash values ($L$ hashes for each element). To query the index (right), we approximate the similarity between each target and query element by averaging the number of hash collisions. These similarities are used to approximate the set relevance score for each target set.}
\label{fig:trecall}
\end{center}
\end{figure*}

\section{Algorithm}
At a high level, a DESSERT index $\mathcal{D}$ compresses the collection of target sets into a form that makes set to set similarity operations efficient to calculate. This is done by replacing each set $S_i$ with a sketch $\mathcal{D}[i]$ that contains the LSH values of each $x_j \in S_i$. At query time, we compare the corresponding LSH values of the query set $Q$ with the hashes in each $\mathcal{D}[i]$ to approximate the pairwise similarity matrix between $Q$ and $S$ (Figure~\ref{fig:trecall}). This matrix is used as the input for the aggregation functions $A$ and $\sigma$ to rank the target sets and return an estimate of $S^*$.
 
We assume the existence of a locality-sensitive hashing (LSH) family $\mathcal{H} \subset (\mathbb{R}^d \rightarrow \mathbb{Z}$) such that for all LSH functions $h \in \mathcal{H}$, $p(h(x) = h(y)) = \simwritten(x, y)$. LSH functions with this property exist for cosine similarity (signed random projections)~\cite{10.1145/509907.509965}, Euclidean similarity ($p$-stable projections)~\cite{10.1145/997817.997857}, and Jaccard similarity (minhash or simhash)~\cite{10.5555/829502.830043}. LSH is a well-developed theoretical framework with a wide variety of results and extensions~\cite{andoni2015optimal,andoni2014beyond,ji2012super,shrivastava2015improved}. See Appendix~\ref{appendix-lsh} for a deeper overview.


Algorithm~\ref{alg:indexing} describes how to construct a DESSERT index $\mathcal{D}$. We first take $L$ LSH functions $f_t$ for $t \in [1, L]$, $f_t \in \mathcal{H}$. We next loop over each $S_i$ to construct $\mathcal{D}[i]$. For a given $S_i$, we arbitrarily assign an identifier $j$ to each vector $x \in S_i$, $j \in [1, |S_i|]$. We next partition the set $[1, m_i]$ using each hash function $h_t$, such that for a partition $p_t$, indices $j_1$ and $j_2$ are in the same set in the partition iff $h(S_{j_1}) = h(S_{j_2})$. We represent the results of these partitions in a universal hash table indexed by hash function id and hash function value, such that
$\mathcal{D}[i]_{t, h} = \{j \,|\, x_j \in S_i \wedge f_t(x_j) = h\}$.

Algorithm \ref{alg:query} describes how to query a DESSERT index $\mathcal{D}$. At a high level, we query each sketch $\mathcal{D}_i$ to get an estimate of $F(Q, S_i)$, $score_i$, and then take the argmax over the \textit{estimates} to get an estimate of $\argmax_{i \in \{1, \ldots N\}} F(Q, S_i)$. To get these estimates, we first compute the hashes $h_{t, q}$ for each query $q$ and LSH function $f_t$. Then, to get an estimate $score_i$ for a set $S_i$, we loop over the hashes $h_{t, q}$ for each query vector $q$ and count how often each index $j$ appears in $\mathcal{D}[i]_{t, h_{t,q}}$. After we finish this step, we have a count for each $j$ that represents how many times $h_t(q) = h_t(x_j)$. Equivalently, since $p(h(x) = h(y)) = \simwritten(x, y)$, if we divide by $L$ we have an estimate for $\simwritten(x_j, q)$. We then apply $\sigma$ to these estimates and save the result in a variable $\text{agg}_q$ to build up the inputs to $A$, and then apply $A$ to get our final estimate for $F(Q, S_i)$, which we store in $score_i$.

\begin{table}[t]
    \centering
    \caption{Notation table with examples from the document search application.}
    \begin{tabular}{cll}
        \toprule
        Notation & Definition & Intuition (Document Search) \\
        \midrule
        $D$ &Set of target vector sets & Collection of documents\\
        $N$ & Cardinality $|D|$ & Number of documents \\
        $\mathcal{D}$ & DESSERT index of $D$ & Search index data structure\\
        $S_i$ & Target vector set $i$ & $i$th document\\
        $Q$ & Query vector set & Multi-word query (e.g., a question)\\
        $S^*$ & See Definition~\ref{def:problem} & The most relevant document to $Q$\\
        $x_j \in S_i$ & $j$th vector in target set $S_i$ & Embedding from document $i$\\
        $q_j \in Q$ & $j$th vector in query set $Q$ & Embedding from a query\\
        $d$ & $s_j, x_j \in \mathbb{R}^d$ & Embedding dimension \\
        $m_i$, $m$ & Cardinality $|S_i|$, $m_i = m$ & Number of embeddings in $i$th document\\
        $F(Q, S_i)$ & $Q$ and $S_i$ relevance score & Measures query-document similarity \\
        $score_i$ & Estimate of $F(Q, S_i)$ & Approximation of relevance score\\
        $\mathcal{D}[i]$ & Sketch of $i$th target set & Estimates relevance score for $S_i$ and any $Q$\\
        $\simwritten(a, b)$ & $a$ and $b$ vector similarity  & Embedding similarity\\
        $A$, $\sigma$ & See Section~\ref{sec:formal} & Components of relevance score\\
        $L$& Number of hashes & Larger $L$ increases accuracy and latency\\
        $f_i$ & $i$th LSH function & Often maps nearby points to the same value\\
        $\mat{s}(q,S_i)$, $\mat{s}$ & $\simwritten(q, x_j)$ for $x_j \in S_i$ &  Query embedding similarities with $S_i$\\
        \bottomrule
    \end{tabular}
    \label{tab:notation}
\end{table}

\begin{figure}
\begin{minipage}[t]{0.36\textwidth}

\begin{algorithm}[H]
\begin{algorithmic}[1]
   \STATE {\bfseries Input:} $N$ sets $S_i$, $|S_i| = m_i$
   \STATE {\bfseries Output:} A DESSERT index $\mathcal{D}$
    \STATE $\mathcal{D} = $ an array of $N$ hash tables, each indexed by $x \in \mathbb{Z}^2$
    \FOR{$i = 1$ to $N$}
        \FOR{$x_j$ {\bfseries in} $S_i$}
            \FOR{$t=1$ {\bfseries to} $L$}
                \STATE $h = f_t(x_j)$
                \STATE $\mathcal{D}[i]_{t, h} = \mathcal{D}[i]_{t, h} \cup \{j\}$
            \ENDFOR
        \ENDFOR
    \ENDFOR
    \STATE Return $\mathcal{D}$
\end{algorithmic}
\caption{Building a DESSERT Index}
   \label{alg:indexing}
\end{algorithm}

\end{minipage}\hfill
\begin{minipage}[t]{0.6\textwidth}

\begin{algorithm}[H]
\begin{algorithmic}[1]
   \STATE {\bfseries Input:} DESSERT index $\mathcal{D}$, query set $Q$, $|Q| = m_q$.
   \STATE {\bfseries Output:} Estimate of $\argmax_{i \in \{1\ldots N\}} A \circ \sigma (Q, S_i)$
        \FOR{$q$ {\bfseries in} $Q$}
            \STATE $h_{1, q}, \ldots , h_{L, q} = f_1(q), \ldots, f_L(q)$
        \ENDFOR
        \FOR{$i = 1$ to $N$}
            \FOR{$q$ {\bfseries in} $Q$}
                    \STATE $\hat{\mat{s}} = 0$
                        \FOR{$t=1$ {\bfseries to} $L$}
                            \FOR{$j$ {\bfseries in} $\mathcal{D}[i]_{t, h_{t, q}}$}
                                \STATE $\hat{\mat{s}}_j = \hat{\mat{s}}_j + 1$
                    \ENDFOR
                \ENDFOR
        \STATE $\hat{\mat{s}}$ = $\hat{\mat{s}}$ / L
        \STATE $agg_q = \sigma(\hat{\mat{s}})$
    \ENDFOR
    \STATE $score_i = A(\{agg_q \,|\, q \in Q\})$
    \ENDFOR
    \STATE Return $\argmax_{i\in \{1\ldots N\}} score_i$
\end{algorithmic}
\caption{Querying a DESSERT Index}
   \label{alg:query}
\end{algorithm}

\end{minipage}
\end{figure}

\section{Theory}
\label{sec:theory}

In this section, we analyze DESSERT's query runtime and provide probabilistic bounds on the correctness of its search results. We begin by finding the hyperparameter values and conditions that are necessary for DESSERT to return the top-ranked set with high probability. Then, we use these results to prove bounds on the query time. In the interest of space, we defer proofs to the Appendix.


\paragraph{Notation:} For the sake of simplicity of presentation, we suppose that all target sets have the same number of elements $m$, i.e. $|S_i| = m$. If this is not the case, one may replace $m_i$ with $m_{\max}$ in our analysis. We will use the boldface vector $\mat{s}(q,S_i) \in \mathbb{R}^{|S_i|}$ to refer to the set of pairwise similarity calculations $\{\simwritten(q, x_1), \ldots, \simwritten(q, x_{m_i})\}$ between a query vector and the elements of $S_i$, and we will drop the subscript $(q,S_i)$ when the context is clear. See Table~\ref{tab:notation} for a complete notation reference.


\subsection{Inner Aggregation}
\label{sec:theory:inner_aggregation}

We begin by introducing a condition on the $\sigma$ component of the relevance score that allows us to prove useful statements about the retrieval process.

\begin{restatable}{definition}{maximal}
\label{def:maximal}
A function $\sigma(\mat{x}): \mathbb{R}^m \to \mathbb{R}$ is $(\alpha,\beta)$-maximal on $U \subset \mathbb{R}^m$ if for $0 < \beta \leq 1 \leq \alpha$, $\mkern5mu \forall x \in U$:
$$ \beta \max \mat{x} \leq \sigma(\mat{x}) \leq \alpha \max \mat{x}$$
\end{restatable}

The function $\sigma(\mat{x}) = \max \mat{x} $ is a trivial example of an $(\alpha,\beta)$-maximal function on $\mathbb{R}^m$, with $\beta = \alpha = 1$. However, we can show that other functions also satisfy this definition:


\begin{restatable}{lemma}{othermaximal}
\label{lem:other_maximal}
If $\varphi(x): \mathbb{R} \to \mathbb{R}$ is  $(\alpha, \beta)$-maximal on an interval $I$, then the following function $\sigma(x): \mathbb{R}^m \to \mathbb{R}$ is $\left(\alpha, \frac{\beta}{m}\right)$-maximal on $U = I^m$:
$$\sigma(\mat{x}) = \frac{1}{m}\sum_{i = 1}^{m} \varphi(x_i)$$
\end{restatable}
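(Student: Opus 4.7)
The plan is to verify both halves of the $(\alpha, \beta/m)$-maximality condition by reducing to the scalar $(\alpha,\beta)$-maximality of $\varphi$ coordinatewise. Fix $\mathbf{x} \in I^m$; I need to establish
\[
\frac{\beta}{m}\max\mathbf{x} \;\le\; \frac{1}{m}\sum_{i=1}^m \varphi(x_i) \;\le\; \alpha \max\mathbf{x}.
\]

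For the upper bound, I would apply the scalar $(\alpha,\beta)$-maximality of $\varphi$ to each term (treating $x_i$ as a length-one vector whose maximum is itself), giving $\varphi(x_i) \le \alpha x_i$. Since $x_i \le \max\mathbf{x}$ and $\alpha > 0$, each term is at most $\alpha \max\mathbf{x}$, and averaging $m$ such terms recovers exactly $\alpha \max\mathbf{x}$. This half is routine.

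For the lower bound, let $i^\star = \argmax_i x_i$ so that $x_{i^\star} = \max\mathbf{x}$. The scalar lower bound immediately yields $\varphi(x_{i^\star}) \ge \beta \max\mathbf{x}$. For the remaining $m-1$ terms, I would invoke $\varphi(x_i) \ge \beta x_i$. Under the natural assumption that $I \subseteq [0,\infty)$ (which matches the paper's similarity-score setting where $\simwritten(q,x)$ is typically non-negative), each of these terms is non-negative, so dropping them only weakens the sum. Thus $\sum_i \varphi(x_i) \ge \beta \max\mathbf{x}$, and dividing by $m$ yields the desired $\tfrac{\beta}{m}\max\mathbf{x}$.

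The main obstacle is precisely the lower bound on the non-maximizing coordinates: without a non-negativity hypothesis on $I$, a few strongly negative $\varphi(x_i)$ could in principle drag the average below $\tfrac{\beta}{m}\max\mathbf{x}$. Because the paper is motivated by similarity scores lying in $[0,1]$, I expect this to be handled by the implicit domain assumption rather than by a new constraint on $\varphi$. If one wanted to avoid any assumption on $I$, an alternative would be to bound $\sum_i \varphi(x_i) \ge \beta \sum_i x_i$ and then appeal to $\sum_i x_i \ge \max\mathbf{x}$, but this again reduces to the same non-negativity condition, so I would proceed with the direct argument above.
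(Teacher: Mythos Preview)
Your proof is correct and follows essentially the same approach as the paper: the upper bound is identical, and for the lower bound the paper uses precisely the alternative you sketch at the end, namely $\sum_i \varphi(x_i) \ge \beta \sum_i x_i \ge \beta \max\mathbf{x}$. You are in fact more careful than the paper, which invokes $\sum_i x_i \ge \max\mathbf{x}$ as a ``property of the max function'' without flagging the implicit non-negativity of $I$ that you correctly identify as required.
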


Note that in $\mathbb{R}$, the $(\alpha, \beta)$-maximal condition is equivalent to lower and upper bounds by linear functions $\beta x$ and $\alpha x$ respectively, so many natural functions satisfy Lemma~\ref{lem:other_maximal}. We are particularly interested in the case $I = [0, 1]$, and note that possible such $\varphi$ include $\varphi(x) = x$ with $\beta = \alpha = 1$, the exponential function $\varphi(x) = e^{x} - 1$ with $\beta = 1$, $\alpha = e - 1$, and the debiased sigmoid function $\varphi(x) = \frac{1}{1+e^{-x}} - \frac{1}{2}$ with $\beta \approx 0.23$, $\alpha = 0.25$. Our analysis of DESSERT holds when the $\sigma$ component of the relevance score is an $(\alpha, \beta)$ maximal function. 

In line $12$ of Algorithm~\ref{alg:query}, we estimate $\sigma(\mat{s})$ by applying $\sigma$ to a vector of normalized counts $\hat{\mat{s}}$. In Lemma~\ref{lem:upper_bound_inner}, we bound the probability that a low-similarity set (one for which $\sigma(\mat{s})$ is low) scores well enough to outrank a high-similarity set. In Lemma~\ref{lem:lower_bound_inner}, we bound the probability that a high-similarity set scores poorly enough to be outranked by other sets. Note that the failure rate in both lemmas decays exponentially with the number of hash tables $L$.



\begin{restatable}{lemma}{upperboundinner}
\label{lem:upper_bound_inner}
Assume $\sigma$ is $(\alpha,\beta)$-maximal. Let $0 < s_{\max} < 1$ be the maximum similarity between a query vector and the vectors in the target set and let $\hat{\mat{s}}$ be the set of estimated similarity scores. Given a threshold $\alpha s_{\max} < \tau < \alpha$, we write $\Delta = \tau - \alpha s_{\max}$, and we have
$$ \mathrm{Pr}[\sigma({\hat{\mat{s}}}) \geq \alpha s_{\max} + \Delta] \leq m \gamma^{L}$$
for $\gamma = \left(\frac{s_{\max}(\alpha - \tau)}{\tau(1 - s_{\max})}\right)^{\frac{\tau}{\alpha}} \left(\frac{\alpha (1 - s_{\max})}{\alpha - \tau}\right) \in (s_{\max}, 1)$. Furthermore, this expression for $\gamma$ is increasing in $s_{\max}$ and decreasing in $\tau$, and $\gamma$ has one sided limits $\lim_{\tau \searrow \alpha s_{\max} } \gamma = 1$ and $\lim_{\tau \nearrow \alpha} \gamma = s_{\max}$.
\end{restatable}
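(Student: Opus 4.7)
}

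The plan is to reduce the statement about $\sigma(\hat{\mat{s}})$ to a statement about a single coordinate of $\hat{\mat{s}}$, and then apply a standard Chernoff bound. First, I would use the $(\alpha,\beta)$-maximal upper bound: since $\sigma(\hat{\mat{s}}) \leq \alpha \max_j \hat{s}_j$, the event $\{\sigma(\hat{\mat{s}}) \geq \tau\}$ is contained in $\{\max_j \hat{s}_j \geq \tau/\alpha\}$. A union bound over the $m$ coordinates then gives
\begin{equation*}
\Pr\!\left[\sigma(\hat{\mat{s}}) \geq \tau\right] \;\leq\; \sum_{j=1}^m \Pr\!\left[\hat{s}_j \geq \tau/\alpha\right] \;\leq\; m \cdot \max_j \Pr\!\left[\hat{s}_j \geq \tau/\alpha\right].
\end{equation*}

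Next, I would note that by the LSH property, $L\hat{s}_j = \sum_{t=1}^L \mathbb{1}[f_t(q)=f_t(x_j)]$ is a sum of $L$ i.i.d.\ Bernoulli variables with parameter $p_j = \simwritten(q,x_j) \leq s_{\max}$. Since $\tau > \alpha s_{\max}$ implies $\tau/\alpha > p_j$, I can apply the classical Chernoff upper tail bound for Bernoulli sums,
\begin{equation*}
\Pr\!\left[\hat{s}_j \geq \tau/\alpha\right] \;\leq\; \exp\!\bigl(-L\, D(\tau/\alpha \,\|\, p_j)\bigr),
\end{equation*}
where $D(t\|p) = t\log(t/p) + (1-t)\log((1-t)/(1-p))$ is the binary KL divergence. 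Because $D(t\|p)$ is increasing in $|t-p|$ for fixed $t>p$ shifted via $p$, the tail probability is largest when $p_j = s_{\max}$, so one can upper-bound $\Pr[\hat{s}_j \geq \tau/\alpha] \leq \exp(-L\, D(\tau/\alpha \,\|\, s_{\max}))$. Substituting $t = \tau/\alpha$ and $1-t=(\alpha-\tau)/\alpha$ into $\exp(-D(t\|p)) = (p/t)^t((1-p)/(1-t))^{1-t}$ and simplifying algebraically produces exactly the stated expression for $\gamma$; this yields the bound $m\gamma^L$.

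It remains to verify the claimed properties of $\gamma$. The cleanest way is to read them off the KL-divergence form $\gamma = \exp(-D(\tau/\alpha \,\|\, s_{\max}))$. Positivity of $D$ for $\tau/\alpha > s_{\max}$ immediately gives $\gamma<1$, and the two one-sided limits follow from $D(s_{\max}\|s_{\max})=0$ (giving $\gamma\to 1$ as $\tau\searrow\alpha s_{\max}$) and $D(1\|s_{\max})=-\log s_{\max}$ (giving $\gamma\to s_{\max}$ as $\tau\nearrow\alpha$). Monotonicity in $s_{\max}$ and $\tau$ follows by computing $\partial D/\partial p$ and $\partial D/\partial t$ and checking signs: $\partial D/\partial p = -(t-p)/(p(1-p)) < 0$ for $t>p$ (so $D$ decreases in $s_{\max}$, hence $\gamma$ increases), while $\partial D/\partial t = \log\bigl(\tfrac{t(1-p)}{p(1-t)}\bigr)>0$ for $t>p$ (so $D$ increases in $t=\tau/\alpha$, hence $\gamma$ decreases).

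The main obstacle is not conceptual but algebraic: one must carefully rewrite $\exp(-D(\tau/\alpha \,\|\, s_{\max}))$ into the asymmetric factored form given in the statement, and confirm that the stated $\gamma$ indeed lies in $(s_{\max},1)$ on the open interval $\tau \in (\alpha s_{\max},\alpha)$ rather than only matching at the endpoints. Once the identification $\gamma = \exp(-D(\tau/\alpha \,\|\, s_{\max}))$ is established, everything else reduces to standard facts about binary KL divergence.
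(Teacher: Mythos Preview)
Your proposal is correct and follows the same overall strategy as the paper---use the $(\alpha,\beta)$-maximal upper bound to reduce to $\max_j \hat{s}_j$, apply a union bound over the $m$ coordinates, and then a Chernoff/Bernoulli tail bound optimized over the exponential parameter. The paper performs the Chernoff step on $\sigma(\hat{\mat{s}})$ directly and bounds the MGF by $m$ times the MGF of the worst coordinate, whereas you apply the union bound first and then Chernoff on a single coordinate; these two orderings produce the identical final inequality.

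Where your route is genuinely cleaner is in the analysis of $\gamma$. The paper derives $\gamma$ by explicitly optimizing $t$ and then establishes the monotonicity and limit claims through direct (and somewhat laborious) computation of partial derivatives and one-sided limits of the factored expression. Your identification $\gamma = \exp\bigl(-D(\tau/\alpha\,\|\,s_{\max})\bigr)$ collapses all of that: positivity of $D$ gives $\gamma<1$, the endpoint values $D(p\|p)=0$ and $D(1\|p)=-\log p$ give the two limits, and the signs of $\partial D/\partial p$ and $\partial D/\partial t$ give the monotonicity---all standard facts. This is shorter and more transparent than the paper's treatment, and the algebraic check that the stated factored form of $\gamma$ equals $\exp(-D(\tau/\alpha\,\|\,s_{\max}))$ is a two-line rewrite.
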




\begin{restatable}{lemma}{lowerrboundinner}
\label{lem:lower_bound_inner}
With the same assumptions as Lemma~\ref{lem:upper_bound_inner} and given $\Delta > 0$, we have:
$$ Pr[\sigma(\hat{\mat{s}}) \leq \beta s_{\max} - \Delta] \leq 2 e^{-2L\Delta^2 / \beta^2}$$
\end{restatable}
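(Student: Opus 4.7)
The plan is to reduce the lower tail of $\sigma(\hat{\mat{s}})$ to the lower tail of a single Binomial random variable by exploiting the $(\alpha,\beta)$-maximal property, and then apply Hoeffding's inequality.

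First I would let $j^\star$ denote any index with $s_{j^\star} = s_{\max}$. Since $\hat s_{j^\star}$ is one coordinate of $\hat{\mat{s}}$, certainly $\max \hat{\mat{s}} \geq \hat s_{j^\star}$, and combining this with the lower side of the $(\alpha,\beta)$-maximal condition gives
\begin{align*}
\sigma(\hat{\mat{s}}) \;\geq\; \beta \max \hat{\mat{s}} \;\geq\; \beta \hat s_{j^\star}.
\end{align*}
Hence the event $\{\sigma(\hat{\mat{s}}) \leq \beta s_{\max} - \Delta\}$ implies $\hat s_{j^\star} \leq s_{\max} - \Delta/\beta$, so by monotonicity of probability it suffices to bound $\Pr[\hat s_{j^\star} \leq s_{\max} - \Delta/\beta]$.

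Next I would observe that $\hat s_{j^\star}$, as constructed in line~12 of Algorithm~\ref{alg:query} (before dividing, the count of $t\in[L]$ for which $f_t(q)=f_t(x_{j^\star})$), is exactly $\tfrac{1}{L}\sum_{t=1}^L Y_t$ where the $Y_t$ are i.i.d.\ Bernoulli$(\mathrm{sim}(q,x_{j^\star})) = $ Bernoulli$(s_{\max})$, using the defining LSH property $\Pr[h(x)=h(y)]=\mathrm{sim}(x,y)$ and the independence of the $L$ hash functions. Applying the one-sided Hoeffding inequality for bounded $[0,1]$-valued random variables yields
\begin{align*}
\Pr\!\left[\hat s_{j^\star} \leq s_{\max} - \tfrac{\Delta}{\beta}\right] \;\leq\; \exp\!\left(-\tfrac{2L\Delta^2}{\beta^2}\right),
\end{align*}
which already beats the stated bound by a factor of two (the factor $2$ in the lemma is harmless slack and matches what one gets from the two-sided form of Hoeffding).

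There is not really a single hard step here; the only thing to be careful about is the direction of the inequality relating $\sigma(\hat{\mat{s}})$ to a concrete empirical quantity we can concentrate, which is why we use the lower (not upper) side of the maximal condition on a coordinate we already know is large in expectation. Chaining the two displays above and assembling them into a single probability statement finishes the proof.
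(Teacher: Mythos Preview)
Your proposal is correct and follows essentially the same route as the paper: use the $\beta$-side of the maximal condition to pass from $\sigma(\hat{\mat{s}})$ to $\beta\hat s_{j^\star}$, then apply Hoeffding to the scaled Binomial $\hat s_{j^\star}$. The only cosmetic difference is that the paper first bounds the one-sided deviation by the two-sided one before invoking Hoeffding (hence the factor $2$), whereas you apply the one-sided inequality directly and obtain the slightly sharper bound $e^{-2L\Delta^2/\beta^2}$; either way the stated lemma follows.
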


\subsection{Outer Aggregation}

Our goal in this section is to use the bounds established previously to prove that our algorithm correctly ranks sets according to $F(Q, S)$. To do this, we must find conditions under which the algorithm successfully identifies $S^{\star}$ based on the approximate $F(Q, S)$ scores.

Recall that $F(Q, S)$ consists of two aggregations: the inner aggregation $\sigma$ (analyzed in Section~\ref{sec:theory:inner_aggregation}) and the outer aggregation $A$. We consider normalized \emph{linear} functions for $A$, where we are given a set of weights $0 \leq w \leq 1$ and we rank the target set according to a weighted linear combination of $\sigma$ scores.
$$ F(Q, S) = \frac{1}{m} \sum_{j = 1}^m w_j \sigma(\mat{\hat{s}}(q_j, S) )$$
With this instantiation of the vector set search problem, we will proceed in Theorem \ref{thm:correct_results} to identify a choice of the number of hash tables $L$ that allows us to provide a probabilistic guarantee that the algorithm's query operation succeeds. We will then use this parameter selection to bound the runtime of the query operation in Theorem \ref{thm:runtime}.

\begin{restatable}{theorem}{correctresults}
\label{thm:correct_results}
Let $S^{\star}$ be the set with the maximum $F(Q,S)$ and let $S_i$ be any other set. Let $B^{\star}$ and $B_i$ be the following sums (which are lower and upper bounds for $F(Q,S^{\star})$ and $F(Q,S_i)$, respectively)
$$B^{\star} = \frac{\beta}{m_q}\sum_{j = 1}^{m_q} w_j s_{\max}(q_j, S^{\star}) \qquad B_i = \frac{\alpha}{m_q}\sum_{j = 1}^{m_q} w_j s_{\max}(q_j, S_i)$$
Here, $s_{\max}(q,S)$ is the maximum similarity between a query vector $q$ and any element of the target set $S$. 
Let $B'$ be the maximum value of $B_i$ over any set $S_i \ne S$. Let $\Delta$ be the following value (proportional to the difference between the lower and upper bounds) 
$$\Delta = (B^{\star} - B') / 3$$ 
If $\Delta > 0$, a DESSERT structure with the following value\footnote{$L$ additionally depends on the data-dependent parameter $\Delta$, which we elide in the asymptotic bound; see the proof in the appendix for the full expression for $L$.} of $L$ solves the search problem from Definition~\ref{def:problem} with probability $1 - \delta$. 
$$L = O\left(\log\left(\frac{Nm_qm}{\delta}\right)\right)$$
\end{restatable}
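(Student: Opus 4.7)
The plan is to reduce the problem to showing that with high probability, the estimated score $\text{score}_\star$ stays near its true lower bound $B^\star$ while every $\text{score}_i$ for $i \ne \star$ stays near its true upper bound $B_i \le B'$. Since $B^\star - B' = 3\Delta$, it suffices to guarantee $\text{score}_\star \ge B^\star - \Delta$ and $\text{score}_i \le B_i + \Delta$, which together give
\[
\text{score}_\star - \text{score}_i \;\ge\; (B^\star - \Delta) - (B' + \Delta) \;=\; 3\Delta - 2\Delta \;=\; \Delta \;>\; 0,
\]
so the algorithm's argmax picks out $S^\star$. Note that $B^\star \le F(Q,S^\star)$ and $B_i \ge F(Q,S_i)$ follow immediately from the $(\alpha,\beta)$-maximality of $\sigma$, applied termwise inside the average.

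The next step is to break the per-set deviation into per-query-vector deviations. Because $\text{score}_i = \tfrac{1}{m_q}\sum_j w_j \sigma(\hat{\mat{s}}(q_j,S_i))$ and $0 \le w_j \le 1$, it is enough to establish for each $j$ the two one-sided events
\[
\mathcal{E}^\star_j: \sigma(\hat{\mat{s}}(q_j,S^\star)) \ge \beta s_{\max}(q_j,S^\star) - \Delta,
\qquad
\mathcal{E}^i_j: \sigma(\hat{\mat{s}}(q_j,S_i)) \le \alpha s_{\max}(q_j,S_i) + \Delta.
\]
Averaging $\mathcal{E}^\star_j$ across $j$ with weights $w_j/m_q$ gives $\text{score}_\star \ge B^\star - \Delta$, and similarly for $\mathcal{E}^i_j$. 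Lemma~\ref{lem:lower_bound_inner} bounds $\Pr[\neg\mathcal{E}^\star_j] \le 2e^{-2L\Delta^2/\beta^2}$, and Lemma~\ref{lem:upper_bound_inner} (with $\tau = \alpha s_{\max}(q_j,S_i) + \Delta$, assumed below $\alpha$) bounds $\Pr[\neg\mathcal{E}^i_j] \le m\gamma^L$ for some $\gamma = \gamma(s_{\max}(q_j,S_i),\tau) < 1$.

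The final step is a union bound over the $N m_q$ per-vector events. Taking $\bar\gamma$ to be the worst-case $\gamma$ over all $(i,j)$ pairs, the total failure probability is at most
\[
N m_q \bigl(\,2e^{-2L\Delta^2/\beta^2} + m\bar\gamma^{L}\,\bigr),
\]
and setting this to $\delta$ and solving for $L$ yields
\[
L \;=\; O\!\left(\max\!\left(\tfrac{\beta^2}{\Delta^2}\log\tfrac{Nm_q}{\delta},\;\tfrac{1}{\log(1/\bar\gamma)}\log\tfrac{Nm_qm}{\delta}\right)\right) \;=\; O\!\left(\log\tfrac{Nm_qm}{\delta}\right),
\]
which matches the stated bound (with the data-dependent constants $\beta,\Delta,\bar\gamma$ absorbed, as noted in the footnote).

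The main obstacle is bookkeeping the two-sided error budget correctly so that the $m$ inside the logarithm comes out naturally from Lemma~\ref{lem:upper_bound_inner} (the one-sided high-probability bound for low-similarity sets) without inflating any other factor. A secondary subtlety is verifying that the threshold choice $\tau = \alpha s_{\max}(q_j,S_i) + \Delta$ lies in the admissible range $(\alpha s_{\max},\alpha)$ for every pair $(i,j)$; under the standing assumption $s_{\max} < 1$ this holds for sufficiently small $\Delta$, and the $\gamma \in (s_{\max},1)$ guaranteed by Lemma~\ref{lem:upper_bound_inner} is uniformly bounded away from $1$, which is what makes the logarithmic dependence on $m$ (rather than polynomial) go through.
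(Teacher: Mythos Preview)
Your proposal is correct and follows essentially the same route as the paper: reduce to per-query-vector one-sided events $\mathcal{E}^\star_j$ and $\mathcal{E}^i_j$, invoke Lemmas~\ref{lem:lower_bound_inner} and~\ref{lem:upper_bound_inner} respectively, then union-bound and solve for $L$. The only cosmetic differences are that the paper splits the $\delta$ budget explicitly as $\delta/2$ for each side and union-bounds the lower-tail events over only the $m_q$ query vectors (not $Nm_q$), giving a slightly tighter constant; it also spends a paragraph arguing that the worst-case $\gamma_{\max}$ over all $s_{\max}$ values is strictly below $1$ for fixed $\Delta>0$, which you assert but do not justify.
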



\subsection{Runtime Analysis}

\begin{restatable}{theorem}{runtime}
\label{thm:runtime}
Suppose that each hash function call runs in time $O(d)$ and that $|\mathcal{D}[i]_{t, h}| < T ~ \forall {i, t, h}$ for some positive threshold $T$, which we treat as a data-dependent constant in our analysis. Then, using the assumptions and value of $L$ from Theorem~\ref{thm:correct_results}, Algorithm \ref{alg:query} solves the Vector Set Search Problem in query time 
$$O\left(m_q \log(Nm_qm/\delta) d + m_q N \log(Nm_qm/\delta)\right)$$
\end{restatable}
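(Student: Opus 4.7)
The plan is to obtain the runtime by a direct operation count on Algorithm~\ref{alg:query}, decomposing the work into two phases (a one-time hash precomputation for the query, and the main per-set scoring loop), and then substituting the value of $L$ from Theorem~\ref{thm:correct_results}. The end goal is to show that the two phases contribute the two additive terms in the stated bound: $O(m_q L d)$ from hashing the query vectors and $O(N m_q L)$ from scoring each target set, with $L = O(\log(N m_q m/\delta))$.

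First I would handle lines 3--5: precomputing $L$ hashes for each of the $m_q$ query vectors, at $O(d)$ per hash call by assumption, costs exactly $O(m_q L d)$, which becomes $O(m_q \log(N m_q m/\delta)\, d)$ after substituting $L$. Next I would analyze the nested loop on lines 6--16. The outer two loops produce $N m_q$ (set, query) pairs. For each such pair, the inner loop on lines 9--11 performs $L$ bucket lookups in the hash table $\mathcal{D}[i]$; each lookup is $O(1)$ and returns at most $T$ indices by the hypothesis $|\mathcal{D}[i]_{t,h}| < T$. So each (set, query) pair incurs $O(LT)$ work to build $\hat{\mat{s}}$, which is $O(L)$ after absorbing the constant $T$. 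Summed over all $N m_q$ pairs, this yields $O(N m_q L)$. Finally, the outer aggregation $A$ on line 15 is invoked once per set on a vector of $m_q$ values, contributing $O(N m_q)$, which is dominated by $O(N m_q L)$.

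The one subtle step is to argue that lines 7 (initializing $\hat{\mat{s}}=0$), 12 (dividing by $L$), and 13 (evaluating $\sigma$) do not introduce an extra factor of $m$ per (set, query) pair, which would otherwise push the bound up by $O(N m_q m)$. I would resolve this by noting that $\hat{\mat{s}}$ has at most $LT = O(L)$ nonzero entries after the inner loop, so maintaining $\hat{\mat{s}}$ as a sparse structure (keyed by the indices that actually appear in some visited bucket) makes initialization $O(1)$, normalization $O(L)$, and the evaluation $\sigma(\hat{\mat{s}})$ computable in $O(L)$ time for the $\sigma$ of interest (e.g.\ $\max$ or the normalized sums of Lemma~\ref{lem:other_maximal}), since zero-valued coordinates contribute a fixed baseline that depends only on $m$ via a closed form. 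This is the main technical obstacle and essentially an implementation-level fact, but it is what allows the $m$-dependence to enter only logarithmically through $L$.

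Adding the two phase costs gives $O(m_q L d + N m_q L)$; substituting $L = O(\log(N m_q m / \delta))$ from Theorem~\ref{thm:correct_results} yields the claimed bound
\[
O\!\left(m_q \log(N m_q m/\delta)\, d + m_q N \log(N m_q m/\delta)\right),
\]
and correctness with probability $1-\delta$ is inherited directly from Theorem~\ref{thm:correct_results} since we use exactly its value of $L$.
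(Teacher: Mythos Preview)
Your proposal is correct and follows essentially the same approach as the paper: a direct operation count on Algorithm~\ref{alg:query} yielding $O(m_q L d + m_q N L T)$, followed by substitution of $L$ from Theorem~\ref{thm:correct_results} and absorption of $T$ as a constant. You are in fact more careful than the paper on one point---the paper's proof does not explicitly address why lines 7, 12, and 13 avoid an $O(m)$ factor per (set, query) pair, whereas your sparse-$\hat{\mat{s}}$ argument closes that gap.
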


This bound is an improvement over a brute force search of $O(m_qmNd)$ when $m$ or $d$ is large. The above theorem relies upon the choice of $L$ that we derived in Theorem \ref{thm:correct_results}.

\section{Implementation Details}

\paragraph{Filtering:} 
\label{sec:filtering}
We find that for large $N$ it is useful to have an initial lossy filtering step that can cheaply reduce the total number of sets we consider with a low false-negative rate. We use an inverted index on the documents for this filtering step. 

To build the inverted index, we first perform $k$-means clustering on a representative sample of individual item vectors at the start of indexing. The inverted index we will build is a map from centroid ids to document ids. As we add each set $S_i$ to $\mathcal{D}$ in Algorithm~\ref{alg:indexing}, we also add it into the inverted index: we find the closest centroid to each vector $x \in S_i$, and then we add the document id $i$ to all of the buckets in the inverted index corresponding to those centroids. 

This method is similar to PLAID, the recent optimized ColBERT implementation~\cite{santhanam2022plaid}, but our query process is much simpler. During querying, we query the inverted index buckets corresponding to the closest \textit{filter$\_$probe} centroids to each query vector. We aggregate the buckets to get a count for each document id, and then only rank the \textit{filter$\_$k} documents with DESSERT that have the highest count.

\paragraph{Space Optimized Sketches:}
DESSERT has two features that constrain the underlying hash table implementation: (1) every document is represented by a hash table, so the tables must be low memory, and (2) each query performs many table lookups, so the lookup operation must be fast. If (1) is not met, then we cannot fit the index into memory. If (2) is not met, then the similarity approximation for the inner aggregation step will be far too slow. Initially, we tried a naive implementation of the table, backed by a std::vector, std::map, or std::unordered\_map. In each case, the resulting structure did not meet our criteria, so we developed TinyTable, a compact hash table that optimizes memory usage while preserving fast access times. TinyTables sacrifice $O(1)$ update-access (which DESSERT does not require) for a considerable improvement to (1) and (2).

A TinyTable replaces the universal hash table in Algorithm~\ref{alg:indexing}, so it must provide a way to map pairs of (hash value, hash table id) to lists of vector ids. At a high level, a TinyTable is composed of $L$ inverted indices from LSH values to vector ids. Bucket $b$ of table $i$ consists of vectors $x_j$ such that $h_i(x_j) = b$. During a query, we simply need to go to the $L$ buckets that correspond to the query vector's $L$ lsh values to find the ids of $S_i$'s colliding vectors. This design solves (1), the fast lookup requirement, because we can immediately go to the relevant bucket once we have a query's hash value. However, there is a large overhead in storing a resizable vector in every bucket. Even an empty bucket will use $3 * 8 = 24$ bytes. This adds up: let $r$ be the hash range of the LSH functions (the number of buckets in the inverted index for each of the $L$ tables). If $N = 1M$, $L = 64$, and $r = 128$, we will use $N \cdot L \cdot r \cdot 24 ~= 196~\text{gigabytes}$ even when \textit{all of the buckets are empty}. 

Thus, a TinyTable has more optimizations that make it space efficient. Each of the $L$ hash table repetitions in a TinyTable are conceptually split into two parts: a list of offsets and a list of vector ids. The vector ids are the concatenated contents of the buckets of the table with no space in between (thus, they are always some permutation of $0$ through $m$ - 1). The offset list describes where one bucket ends and the next begins: the $i$th entry in the offset list is the (inclusive) index of the start of the $i$th hash bucket within the vector id list, and the $i+1$th entry is the (exclusive) end of the ith hash bucket (if a bucket is empty, $indices[i] = indices[i + 1]$). To save more bytes, we can further concatenate the $L$ offset lists together and the $L$ vector id lists together, since their lengths are always $r$ and $m_i$ respectively. Finally, we note that if $m \le 256$, we can store all of the the offsets and ids can be safely be stored as single byte integers. Using the same hypothetical numbers as before, a \textit{filled} TinyTable with $m = 100$ will take up just $N(24 + L(m + r + 1)) ~= 14.7$GB.

\paragraph{The Concatenation Trick:}
In our theory, we assumed LSH functions such that $p(h(x) = h(y)) = \simwritten(x, y)$. However, for practical problems such functions lead to overfull buckets; for example, GLOVE has an average vector cosine similarity of around $0.3$, which would mean each bucket in the LSH table would contain a third of the set. The standard trick to get around this problem is to \textit{concatenate} $C$ hashes for each of the $L$ tables together such that $p(h(x) = h(y)) = \simwritten(x, y)^C$. Rewriting, we have that 
\begin{equation}
\simwritten(x, y) = \exp\left(\frac{\ln\left[p(h(x) = h(y))\right]}{C}\right)
\label{eqn:modified_sim}\end{equation}

During a query, we count the number of collisions across the $L$ tables and divide by $L$ to get $\hat{p}(h(x) = h(y))$ on line $11$ of Algorithm~\ref{alg:query}. We now additionally pass $count / L$ into Equation~\ref{eqn:modified_sim} to get an accurate similarity estimate to pass into $\sigma$ on line $12$. Furthermore, evaluating Equation~\ref{eqn:modified_sim} for every collision probability estimate is slow in practice. There are only $L + 1$ possible values for the $count / L$, so we precompute the mapping in a lookup table.

\section{Experiments}

\textbf{Datasets:} We tested DESSERT on both synthetic data and real-world problems. We first examined a series of synthetic datasets to measure DESSERT's speedup over a reasonable CPU brute force algorithm (using the PyTorch library \cite{paszke2019pytorch} for matrix multiplications). For this experiment, we leave out the prefiltering optimization described in Section~\ref{sec:filtering} to better show how DESSERT performs on its own. Following the authors of~\cite{leybovich2021efficient}, our synthetic dataset consists of random groups of Glove~\cite{pennington2014glove} vectors; we vary the set size $m$ and keep the total number of sets $N = 1000$.

We next experimented with the MS MARCO passage ranking dataset (Creative Commons License)~\cite{nguyen2016ms}, $N \approx 8.8M$. The task for MS MARCO is to retrieve passages from the corpus relevant to a query. We used ColBERT to map the words from each passage and query to sets of embedding vectors suitable for DESSERT~\cite{santhanam2022plaid}. Following~\cite{santhanam2022plaid}, we use the development set for our experiments, which contains $6980$ queries.

Finally, we computed the full resource-accuracy tradeoff for ten of the LoTTE out-of-domain benchmark datasets, introduced by ColBERTv2~\cite{santhanam2021colbertv2}. We excluded the pooled dataset, which is simply the individual datasets merged.

\textbf{Experiment Setup:} We ran our experiments on an Intel(R) Xeon(R) CPU E5-2680 v3 machine with 252 GB of RAM. We restricted all experiments to 4 cores (8 threads). We ran each experiment with the chosen hyperparameters and reported overall average recall and average query latency. For all experiments we used the average of max similarities scoring function.

\subsection{Synthetic Data}

\begin{wrapfigure}{o}{0.5\textwidth}
  \begin{center}
    \includegraphics[width=0.4\textwidth]{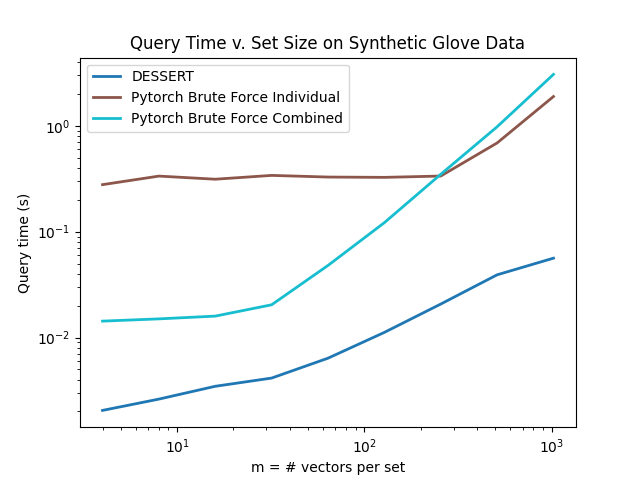}
  \end{center}
  \caption{Query time for DESSERT vs. brute force on $1000$ random sets of $m$ glove vectors with the $y$-axis as a log scale. Lower is better.}
 \label{fig:time_v_size}
\end{wrapfigure}

The goal of our synthetic data experiment was to examine DESSERT's speedup over brute force vector set scoring. Thus, we generated synthetic data where both DESSERT and the brute force implementation achieved perfect recall so we could compare the two methods solely on query time.

The two optimized brute force implementations we tried both used PyTorch, and differed only in whether they computed the score between the query set and each document set individually ("Individual") or between the query set and all document sets at once using PyTorch's highly performant reduce and reshape operations ("Combined").

In each synthetic experiment, we inserted $1000$ documents of size $m$ for $m \in [2, 4, 8, 16, ..., 1024]$ into DESSERT and the brute force index. The queries in each experiment were simply the documents with added noise. The DESSERT hyperparameters we chose were $L = 8$ and $C = \log_2(m) + 1$. The results of our experiment, which show the relative speedup of using DESSERT at different values of $m$, are in Figure~\ref{fig:time_v_size}. We observe that DESSERT achieves a consistent 10-50x speedup over the optimized Pytorch brute force method and that the speedup increases with larger $m$ (we could not run experiments with even larger $m$ because the PyTorch runs did not finish within the time allotted).


\subsection{Passage Retrieval}

Passage retrieval refers to the task of identifying and returning the most relevant passages from a large corpus of documents in response to a search query. In these experiments, we compared DESSERT to PLAID, ColBERT's heavily-optimzed state-of-the-art late interaction search algorithm, on the MS MARCO and LoTTE passage retrieval tasks. 


We found that the best ColBERT hyperparameters were the same as reported in the PLAID paper, and we successfully replicated their results. Although PLAID offers a way to trade off time for accuracy, this tradeoff only increases accuracy at the sake of time, and even then only by a fraction of a percent. Thus, our results represent points on the recall vs time Pareto frontier that PLAID cannot reach.

\paragraph{MS MARCO Results} For MS MARCO, we performed a grid search over DESSERT parameters $C=\{4, 5, 6, 7\}$, $L=\{16, 32, 64\}$, $filter\_probe=\{1, 2, 4, 8\}$, and $filter\_k=\{1000, 2048, 4096, 8192, 16384\}$. We reran the best configurations to obtain the results in Table~\ref{tab:MSMARCO}. We report two types of results: methods tuned to return $k = 10$ results and methods tuned to return $k = 1000$ results. For each, we report DESSERT results from a low latency and a high latency part of the Pareto frontier. For $k = 1000$ we use the standard $R@1000$ metric, the average recall of the top $1$ passage in the first $1000$ returned passages. This metric is meaningful because retrieval pipelines frequently rerank candidates after an initial retrieval stage. For $k = 10$ we use the standard $MRR@10$ metric, the average mean reciprocal rank of the top $1$ passage in the first $10$ returned passages. Overall, DESSERT is 2-5x faster than PLAID with only a few percent loss in recall. 

\begin{table}[H]
\centering
\begin{tabular}{ccc}
    Method & Latency (ms) & $MRR@10$ \\
    \midrule
    DESSERT & 9.5 & 35.7 $\pm$ 1.14\\
    DESSERT &  15.5 & 37.2 $\pm$ 1.14\\
    PLAID & 45.1 & 39.2 $\pm$ 1.15 \\ 
    \bottomrule
\end{tabular}
\quad
\begin{tabular}{ccc}
    Method &  Latency (ms) & $R@1000$ \\
    \midrule
    DESSERT & 22.7 & 95.1 $\pm$ 0.49 \\
    DESSERT & 32.3 & 96.0 $\pm$ 0.45 \\
    PLAID & 100 & 97.5 $\pm$ 0.36 \\ 
    \bottomrule
\end{tabular}

\vspace{0.1cm}
\caption{MS MARCO passage retrieval, with methods optimized for k=10 (left) and k=1000 (right). Intervals denote 95\% confidence intervals for average latency and recall. }
\label{tab:MSMARCO}
\end{table}

\paragraph{LoTTE Results} For LoTTE, we performed a grid search over $C=\{4, 6, 8\}$, $L=\{32, 64, 128\}$, $filter\_probe=\{1, 2, 4\}$, and $filter\_k=\{1000, 2048, 4096, 8192\}$. In Figure~\ref{fig:LoTTE_k10}, we plot the full Pareto tradeoff for DESSERT on the $10$ LoTTE datasets (each of the $5$ categories has a "forum" and "search" split) over these hyperparameters, as well as the single lowest-latency point achievable by PLAID. For all test datasets, DESSERT provides a Pareto frontier that allows a tradeoff between recall and latency. For both Lifestyle test splits, both Technology test splits, and the Recreation and Science test-search splits, DESSERT achieves a 2-5x speedup with minimal loss in accuracy. On Technology, DESSERT even exceeds the accuracy of PLAID at half of PLAID's latency.

\begin{figure*}[t]
\begin{center}
\centerline{\includegraphics[width=13cm]{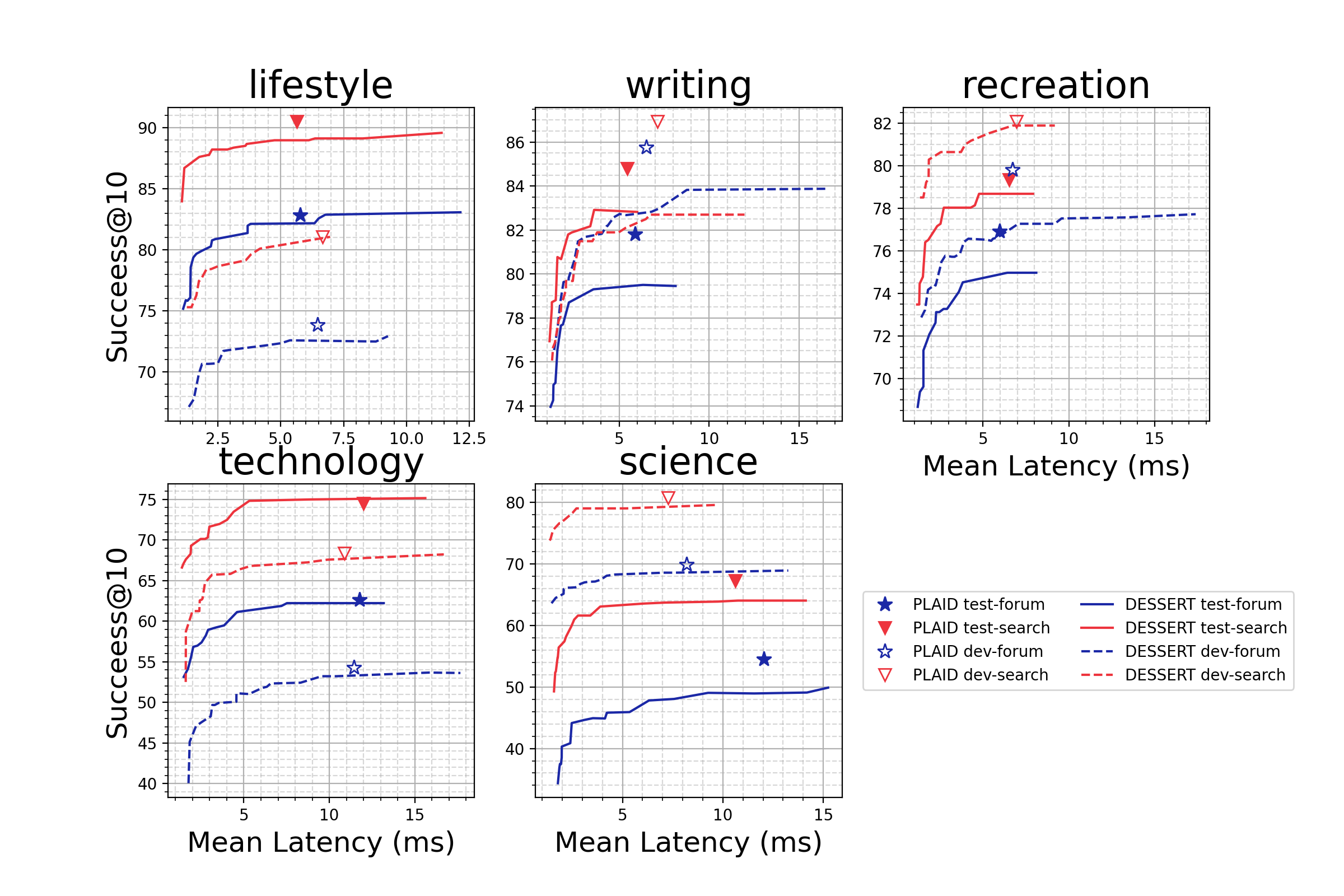}}
\caption{Full Pareto frontier of DESSERT on the LoTTE datasets. The PLAID baseline shows the lowest-latency result attainable by PLAID (with a FAISS-IVF base index and centroid pre-filtering).}
\label{fig:LoTTE_k10}
\end{center}
\end{figure*}

\section{Discussion}

We observe a substantial speedup when we integrate DESSERT into ColBERT, even when compared against the highly-optimized PLAID implementation. While the use of our algorithm incurs a slight recall penalty -- as is the case with most algorithms that use randomization to achieve acceleration -- Table~\ref{tab:MSMARCO} and Figure~\ref{fig:LoTTE_k10} shows that we are Pareto-optimal when compared with baseline approaches. 

We are not aware of any algorithm other than DESSERT that is capable of latencies in this range for set-to-set similarity search. While systems such as PLAID are tunable, we were unable to get them to operate in this range. For this reason, DESSERT is likely the only set-to-set similarity search algorithm that can be run in real-time production environments with strict latency constraints. 

We also ran a single-vector search baseline using ScaNN, the leading approximate kNN index \cite{guo2020accelerating}. ScaNN yielded 0.77 Recall@1000, substantially below the state of the art. This result reinforces our discussion in Section \ref{sec:single-insufficient} on why single-vector search is insufficient.

\paragraph{Broader Impacts and Limitations:} Ranking and retrieval are important steps in language modeling applications, some of which have recently come under increased scrutiny. However, our algorithm is unlikely to have negative broader effects, as it mainly enables faster, more cost-effective search over larger vector collections and does not contribute to the problematic capabilities of the aforementioned language models. Due to computational limitations, we conduct our experiments on a relatively small set of benchmarks; a larger-scale evaluation would strengthen our argument. Finally, we assume sufficiently high relevance scores and large gaps in our theoretical analysis to identify the correct results. These hardness assumptions are standard for LSH.

\section{Conclusion}

In this paper, we consider the problem of vector set search with vector set queries, a task understudied in the existing literature. We present a formal definition of the problem and provide a motivating application in semantic search, where a more efficient algorithm would provide considerable immediate impact in accelerating late interaction search methods. To address the large latencies inherent in existing vector search methods, we propose a novel randomized algorithm called DESSERT that achieves significant speedups over baseline techniques. We also analyze DESSERT theoretically and, under natural assumptions, prove rigorous guarantees on the algorithm's failure probability and runtime. Finally, we provide an open-source and highly performant C++ implementation of our proposed DESSERT algorithm that achieves 2-5x speedup over ColBERT-PLAID on the MS MARCO and LoTTE retrieval benchmarks. We also note that a general-purpose algorithmic framework for vector set search with vector set queries could have impact in a number of other applications, such as image similarity search \cite{yang2007evaluating}, market basket analysis \cite{kaur2016market}, and graph neural networks \cite{DBLP:journals/corr/abs-1806-01973}, where it might be more natural to model entities via sets of vectors as opposed to restricting representations to a single embedding. We believe that DESSERT could provide a viable algorithmic engine for enabling such applications and we hope to study these potential use cases in the future.

\section{Acknowledgments}

This work was completed while the authors were working at ThirdAI. We do not have any external funding sources to acknowledge.

\bibliographystyle{plain}
\bibliography{main}

\newpage
\appendix
\onecolumn
\section{Proofs of Main Results}
\othermaximal*
\begin{proof}
Take some $\mat x \in D$ (so each $x_i \in I$). Since in $\mathbb{R}$, $\max(x) = x$, we have from the definition of $(\alpha, \beta)$-maximal that
$$\beta x \le x \le \alpha x$$
For the upper bound, we have 

$$\sigma(\mat{x}) = \frac{1}{m}\sum_{i = 1}^{m} \varphi(x_i) \le \frac{1}{m} \sum_{i = 1}^m \alpha x_i = \frac{\alpha}{m} \sum_{i = 1}^m x_i \le \frac{\alpha}{m} (m \max(\mat{x})) = \alpha \max(\mat{x})$$

where the second inequality follows by the properties of the max function.

For the lower bound, we have that 

$$\sigma(\mat{x}) = \frac{1}{m}\sum_{i = 1}^{m} \varphi(x_i) \ge \frac{1}{m} \sum_{i = 1}^m \beta x_i = \frac{\beta}{m} \sum_{i = 1}^m x_i \ge \frac{\beta}{m} \max{\mat{x}}$$ 

where the the second inequality again follows by the properties of the max function.
\end{proof}

\upperboundinner*

\begin{proof}
We first apply a generic Chernoff bound to $\sigma(\hat{\mat{s}})$, which gives us the following bounds for any $ t > 0$:
$$\mathrm{Pr}[\sigma(\hat{\mat{s}}) \geq \tau] = \mathrm{Pr}[e^{t\sigma(\hat{\mat{s}})} \geq e^{t\tau}] \leq \frac{\mathbb{E}[e^{t\sigma(\hat{\mat{s}})}]}{e^{t\tau}}$$

We now proceed by continuing to bound the numerator. Because $\sigma$ is $(\alpha,\beta)$-maximal, we can bound $\sigma(\hat{\mat{s}})$ with $\alpha \max \hat{\mat{s}}$. We can further bound $\max \hat{\mat{s}}$ by bounding the maximum with the sum and the sum with $m$ times the maximal element. We are now left with the formula for the moment generating function for $\hat{s}_{\mathrm{\max}}$. $\hat{s}_{\mathrm{\max}} \sim $ scaled binomial $L^{-1}\mathcal{B}(s_{\max}, L)$, so we can directly substitute the binomial moment generating function into the expression:

$$ \mathbb{E}[e^{t\sigma(\hat{\mat{s}})}] \leq \mathbb{E}[e^{t\alpha \max_j \hat{s}_j}] \leq \sum_{j = 1}^{m_i} \mathbb{E}[e^{t\alpha \hat{s}_j}] \leq m \mathbb{E}[e^{t\alpha \hat{s}_{\max}}]$$

$$ = m(1 - s_{\max} + s_{\max}e^{\frac{\alpha t}{L}})^L$$

Combining these two equations yields the following bound:

$$\mathrm{Pr}[\sigma(\hat{\mat{s}}) \geq \tau] \leq me^{-t\tau}(1 - s_{\max} + s_{\max}e^{\frac{\alpha t}{L}})^L$$

We wish to select a value of $t$ to minimize the upper bound. By setting the derivative of the upper bound to zero, and imposing $0 < \tau < \alpha$,  $\alpha \ge 1$, and $0 < s_{max} < 1$, we find that






$$ t^{\star} = \frac{L}{\alpha} \ln\left(\frac{\tau(1 - s_{\max})}{s_{\max}(\alpha - \tau)}\right)$$

This is greater than zero when the numerator inside the $\ln$ is greater than the denominator, or equivalently when $\tau > s_{\max}\alpha$. Thus the valid range for $\tau$ is $(s_{\max}\alpha, \alpha)$ (and similarly the valid range for $s_{\max}$ is $(0, \tau / \alpha)$). These bounds have a natural interpretation: to be meaningful, the threshold must be between the expected value and the maximum value for $\alpha$ times a $p = s_{\max}$ binomial. Substituting $t = t^{\star}$ into our upper bound, we obtain:

$$ \mathrm{Pr}[\sigma(\hat{\mat{s}}) \geq \tau] \leq m \left(\left(\frac{\tau(1 - s_{\max})}{s_{\max}(\alpha - \tau)}\right)^{- \frac{\tau}{\alpha}} \left(\frac{\alpha (1 - s_{\max})}{\alpha - \tau}\right)\right)^L$$

Thus we have that
$$\gamma = \left(\frac{s_{\max}(\alpha - \tau)}{\tau(1 - s_{\max})}\right)^{\frac{\tau}{\alpha}} \left(\frac{\alpha (1 - s_{\max})}{\alpha - \tau}\right)$$

We will now prove our claims about $\gamma$ viewed as a function of $s_{\max} \in (0, \tau / \alpha)$ and $\tau \in (s_{\max}\alpha, \alpha)$. We will first examine the limits of $\gamma$ with respect to $\tau$ at the ends of its range. Since $\gamma$ is continuous, we can find one of the limits by direct substitution:
\begin{align*}
\lim_{\tau \searrow s_{\max}\alpha} \gamma = \lim_{s_{\max} \nearrow \tau / \alpha} \gamma &= \left(\frac{\tau / \alpha(\alpha - \tau)}{\tau(1 - \tau / \alpha)}\right)^{\frac{\tau}{\alpha}} \left(\frac{\alpha (1 - \tau / \alpha)}{\alpha - \tau}\right) = 1^{\frac{\tau}{\alpha}} * 1 = 1 
\end{align*}
The second limit is harder; we merge $\gamma$ into one exponent and then simplify:
\begin{align*}
\lim_{\tau \nearrow \alpha} \gamma &= \lim_{\tau \nearrow \alpha}\left(\frac{s_{\max}(\alpha - \tau)^{1 - \alpha \ \tau}\alpha^{\alpha / \tau}}{\tau (1 - s_{\max})^{1 - \alpha / \tau}}\right)^{\frac{\tau}{\alpha}} \\
&= \lim_{\tau \nearrow \alpha} \frac{s_{\max}(\alpha - \tau)^{1 - \alpha \ \tau}\alpha^{\alpha / \tau}}{\tau (1 - s_{\max})^{1 - \alpha / \tau}}\\
&= \lim_{\tau \nearrow \alpha} s_{\max}(\alpha - \tau)^{1 - \alpha / \tau} = \lim_{\tau \nearrow \alpha} \left(s_{\max}(\alpha - \tau)^{\alpha - \tau}\right)^{-1 / \tau}\\
&= s_{\max}\left(\lim_{\alpha  - \tau \searrow 0} \left(( \alpha - \tau)^{\alpha - \tau}\right)\right)^{-1 / \alpha}\\
&= s_{\max} (1)^{-1 / \alpha} = s_{\max}
\end{align*}
where we use the fact that $\lim_{x \searrow 0} x^x = e^{\lim_{x \searrow 0} x\ln (x)} = 1$ (we can see that $\lim_{x \rightarrow 0} x \ln(x) = \lim_{x \rightarrow 0} \ln(x) / (1 / x) = 0$ with L'Hopital's rule). We next find the partial derivatives of $\gamma$:

$$ \frac{\delta \gamma}{\delta s_{\max}} = \frac{(\tau-\alpha s_{\max})\left(\frac{\alpha s_{\max}-s_{\max}\tau}{\tau-s_{\max}\tau}\right)^{\frac{\tau}{\alpha}}}{s_{\max}(\alpha-\tau)} \qquad \frac{\delta \gamma}{\delta \tau} = \frac{(s_{\max}-1)\left(\frac{\alpha s_{\max}-s_{\max}\tau}{\tau-s_{\max}\tau}\right)^{\frac{\tau}{\alpha}}\ln\left(\frac{\tau-s_{\max}\tau}{\alpha s_{\max}-s_{\max}\tau}\right)}{\alpha-\tau} $$

We are interested in the signs of these partial derivatives. First examining $\frac{\delta \gamma}{\delta s_{\max}}$, $\tau > \alpha s_{\max} \implies \tau - \alpha s_{\max} > 0$. Similarly, $\alpha > \tau \implies \alpha - \tau > 0$ and $s_{\max}(\alpha - \tau) = s_{\max}\alpha - s_{\max}\tau > 0$. Finally, $s_{\max} < 1 \implies \tau(1 - s_{\max}) = \tau - \tau s_{\max} > 0$. Thus every term is positive and the entire fraction is positive. Next examining $\frac{\delta \gamma}{\delta \tau}$, by similar logic $\alpha - \tau > 0$ and $\tau - s_{\max} \tau > 0$ and $\alpha s_{\max} - s_{\max} \tau > 0$. For the $\ln$, since $\tau > \alpha s_{\max}$, $\tau - s_{\max} \tau > \alpha s_{\max} - s_{\max} \tau$, so the numerator is greater than the denominator and the $\ln$ is positive. Finally, since $s_{\max} < 1$, $s_{\max} - 1 < 0$, and thus the entire fraction has a single negative term in the product, so it is negative. 

This completes our lemma: $\gamma$ is a strictly decreasing function of $\tau$ and a strictly increasing function of $s_{\max}$. Since $\tau$ is decreasing and has a leftward limit of $1$ and a rightward limit of $s_{\max}$, all values for $\gamma$ are in $(s_{\max}, 1)$. 


First, we will make a substitution. We note that $\gamma$ is a strictly decreasing function on this interval of $\tau$ with range $(s_{\max}, 1)$. To see this, we will first make the following change of variabls:
$$\tau = \frac{\alpha(k + s)}{k + 1}$$
for $k \in (0, \infty)$. This parameterizes $\tau \in (s_{\max}\alpha, \alpha)$ as a weighted sum of $s_{\max}\alpha$ and $\alpha$. Plugging in and simplifying, we have that

$$ \gamma = \left(\frac{s_{\max}}{k + s_{\max}}\right)^{\frac{k + s_{\max}}{k + 1}}(k + 1)$$

This is a continous function over $k \in (0, \infty)$ and $s_{\max} \in ()$

\end{proof}

\lowerrboundinner*

\begin{proof}
We will prove this lemma with a chain of inequalities, starting with $\mathrm{Pr}[\sigma(\hat{\mat{s}})  \leq \beta s_{\max} - \Delta]$:
\begin{align*}
    \mathrm{Pr}[\sigma(\hat{\mat{s}}) \leq \beta s_{\max} - \Delta] &\leq \mathrm{Pr}[\beta \max \hat{\mat{s}} \leq \beta s_{\max} - \Delta]\\
    & \leq \mathrm{Pr}[\beta \hat{s_{\max}} \leq \beta s_{\max} - \Delta] \\
    & = \mathrm{Pr}[\beta s_{\max} - \beta \hat{s_{\max}} \geq  \Delta] \\
    &  \le \mathrm{Pr}[|\beta s_{\max} - \beta \hat{s_{\max}}| \geq  \Delta] =  \mathrm{Pr}[|\beta \hat{s_{\max}} - \beta s_{\max}| \geq  \Delta]\\
    & \leq 2 e^{-2L\Delta^2 / \beta^2}
\end{align*}   
The explanations for each step are as follows: 
\begin{enumerate}
    \item Because $\sigma(\hat{\mat{s}}) \geq \beta \max \mat{s}$, we can replace $\sigma(\hat{\mat{s}})$ with $\beta \max \mat{s}$ and the probability will be strictly larger.
    \item By the definition of the max operator, each individual $\hat{s_i} \leq \max{\hat{\mat{s}}}$, and in particular this is true for $\hat{s_{\max}}$ (the estimated similarity for the ground-truth maximum similarity vector). Thus, we have $\beta\hat{s_{\max}} \le \beta\max{\hat{\mat{s}}}$, so we can again apply a replacement to get a further upper bound.
    \item Rearranging.
    \item Because $Pr[|a - b| \ge c] = Pr[ a - b \ge c] + Pr[ b - a \ge c]$
    \item $\hat{s}_{\max}$ is the sum of $L$ Bernoulli trials with success probability $s_{\max}$ and scaled by $\beta / L$. Thus, we can directly apply the Hoeffding ineuqliaty with $L$ trials with success probability $\frac{\beta s_{\max}}{L}$.
\end{enumerate}
\end{proof}

\correctresults*

\begin{proof}

For set $S^{\star}$ to have the highest estimated score $\hat{F}(Q,S^{\star})$, we need all other sets to have lower scores. Our overall proof strategy will find a minimum $L$ that upper bounds the probability that each inner aggregation of a set $S \ne S^*$ is greater than $\Delta + \alpha s'_{j, \max}$ and a minimum $L$ that lower bounds the probability that the inner aggregation of $S^*$ is less $\beta s^*_{j, \max} - \Delta$. Finally, we will show that an $L$ that is a maximum of these two values solves the search problem.

\textbf{Upper Bound}: We start with the upper bound on $S_i \ne S^*$: we have from Lemma~\ref{lem:upper_bound_inner} that
 $$\mathrm{Pr}[\sigma({\hat{\mat{s_i}}, q_j}) \geq \alpha s_{i, \max} + \Delta] \leq m \gamma_i^{L}$$
 with
$$\gamma_i = \left(\frac{(\Delta + \alpha s_{i, \max})(1 - s_{i, \max})}{s_{i, \max}(\alpha - (\Delta + \alpha s_{i, \max}))}\right)^{- \frac{\Delta + \alpha s_{i, \max}}{\alpha}} \left(\frac{\alpha (1 - s_{i, \max})}{\alpha - (\Delta + \alpha s_{i, \max})}\right)$$

and $\gamma_i \in (0, 1)$. To make our analysis simpler, we are interested in the maximum $\gamma_{\max}$ of all these $\gamma_i$ as a function of $\Delta$, since then all of these bounds will hold with the same $\gamma$, making it easy to solve for $L$. Since $\lim_{s_{\max} \searrow 0} \gamma = 0$ and $\lim_{s_{\max} \nearrow 1 - \Delta} = 1- \Delta / \alpha$, there must be some $\gamma_{\max} \in (1- \Delta / \alpha, 1)$ that maximizes this expression over any $s_{\max}$. This exact maximum is hard to find analytically, but we are guaranteed that it is less than $1$ by Lemma~\ref{lem:upper_bound_inner}. We will use the term $\gamma_{\max}$ in our analysis, since it is data dependent and guaranteed to be in the range $(0, 1)$. We also numerically plot some values of $\gamma_{\max}$ here with $\alpha = 1$ to give some intuition for what the function looks like over different $\Delta$; we note that it is decreasing in $\Delta$ and approximates a linear function for $\Delta >> 0$. 

\includegraphics[width=10cm]{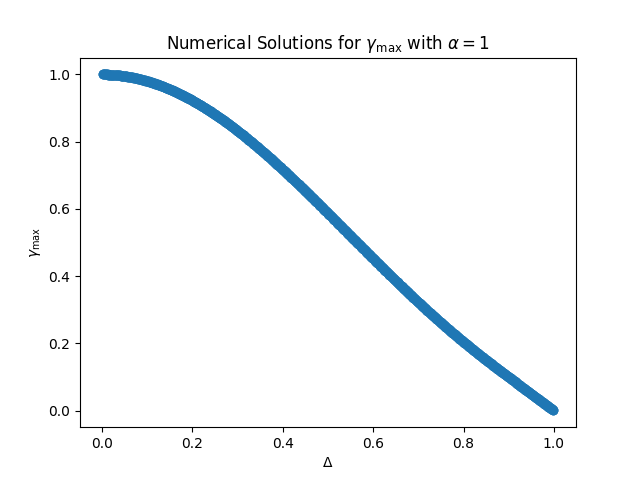}

To hold with the union bound over all $N - 1$ target sets and all $m_q$ query vectors with probability $\frac{\delta}{2}$, we want the probability that our bound holds on a single set and query vector to be less than $\frac{\delta}{2(N - 1)m_q}$. We find that this is true with $L \ge \log{\frac{2(N - 1)m_qm}{\delta}}\log(\frac{1}{\gamma_{\max}})^{-1}$ for any $q_j$ and $S_i$:

$$\mathrm{Pr}[\sigma({\hat{\mat{s_i}}, q_j}) \geq \alpha s_{i, \max} + \Delta] \leq m \gamma_i^{L} \leq m \gamma_{\max}^{L} \leq m \left(\gamma_{\max}\right)^{\log{\frac{2(N - 1)m_qm}{\delta}}\log(\frac{1}{\gamma_{\max}})^{-1}} = \frac{\delta}{2(N - 1)m_q}$$

\textbf{Lower Bound}
We next examine the lower bound on $S^*$: we have from Lemma~\ref{lem:lower_bound_inner} that

 $$\mathrm{Pr}[\sigma(\hat{\mat{s}}^*, q_j) \leq \beta s_{*,\max} - \Delta] \leq 2 e^{-2L\Delta^2 / \beta^2}$$

To hold with the union bound over all $m_q$ query vectors with probability $\frac{\delta}{2}$, we want the probability that our bound holds on a single set and query vector to be less than $\frac{\delta}{2m_q}$. We find that this is true with $L \ge \frac{\log(\frac{4m_q}{\delta})\beta^2}{2\Delta^2}$ for any $q_j$:

$$\mathrm{Pr}[\sigma(\hat{\mat{s}}^*, q_j) \leq \beta s_{*,\max} - \Delta] \leq 2 e^{-2L\Delta^2 / \beta^2} \le 2 e^{-2\frac{\log(\frac{4m_q}{\delta})\beta^2}{2\Delta^2}\Delta^2 / \beta^2} = \frac{\delta}{2m_q}$$

\textbf{Putting it Together}

Let 

$$L = \max\left(\frac{\log{\frac{2(N - 1)m_qm}{\delta}}}{\log(\frac{1}{\gamma_{\max}})}, \frac{\log(\frac{4m_q}{\delta})\beta^2}{2\Delta^2} \right)$$

Then the upper and lower bounds we derived in the last two sections both apply. Let $\mathbbm{1}$ be the random variable that is $1$ when the $m * m_q * (N - 1)$ upper bounds and the $m_q$ lower bounds hold and that is $0$ otherwise. Consider all sets $S_i \ne S^*$. Then the probability we solve the Vector Set Search Problem from Definition~\ref{def:problem} is equal to the probability that all $\forall_i$, $(\hat{F}(Q, S^*) - \hat{F}(Q, S_i)  > 0)$. We now lower bound this probability: 
\begingroup
\allowdisplaybreaks
\begin{align*}
    &\mathrm{Pr}\left(\forall_i(\hat{F}(Q, S^*) - \hat{F}(Q, S_i) > 0)\right)\\
    &= \mathrm{Pr} \left(\forall_i\left(\frac{1}{m_q}\sum_{j = 1}^{m_q} w_j \sigma(\hat{\mat{s}}^*, q_j) - \frac{1}{m_q}\sum_{j = 1}^{m_q} w_j \sigma(\hat{\mat{s}_i}, q_j) > 0\right)\right) &\text{Definition of $\hat{F}$}\\
    &= \mathrm{Pr} \left(\forall_i\left(\sum_{j = 1}^{m_q} w_j (\sigma(\hat{\mat{s}}^*, q_j) - \sigma(\hat{\mat{s}_i}, q_j)) > 0\right)\right)\\
    &= \mathrm{Pr} \left(\forall_i\left(\sum_{j = 1}^{m_q} w_j (\sigma(\hat{\mat{s}}^*, q_j) - \sigma(\hat{\mat{s}_i}, q_j)) > 0 \bigg| \mathbbm{1} = 1 \right )\right) \mathrm{Pr}(\mathbbm{1} = 1) &\mathrm{Pr}(A) \ge \mathrm{Pr}(A \wedge B)\\
    &\ge \mathrm{Pr} \left(\forall_i\left(\sum_{j = 1}^{m_q} w_j (\beta s_{*,\max} - \Delta - (\alpha s_{i, \max} + \Delta)) > 0 \right ) \right)\mathrm{Pr}(\mathbbm{1} = 1) &\text{Bounds hold on $\mathbbm{1} = 1$}\\
    &= \mathrm{Pr} \left(\forall_i\left(\sum_{j = 1}^{m_q} w_j (\beta s_{*,\max} - \alpha s_{i, \max}) > 2\Delta \sum_{j = 1}^{m_q} w_j \right)\right)\mathrm{Pr}(\mathbbm{1} = 1)\\
    &= \mathrm{Pr} \left(\forall_i\left(m_q(B^* - B_i) > 2\Delta \sum_{j = 1}^{m_q} w_j \right)\right)\mathrm{Pr}(\mathbbm{1} = 1) &\text{Definition of $B^*$, $B_i$}\\
    &\ge \mathrm{Pr} \left(\forall_i\left(m_q(B^* - B') > 2\Delta \sum_{j = 1}^{m_q} w_j \right)\right)\mathrm{Pr}(\mathbbm{1} = 1) &\text{Definition of $B'$}\\
    &\ge \mathrm{Pr} \left(\forall_i\left(3m_q\Delta > 2\Delta \sum_{j = 1}^{m_q} w_j \right)\right)\mathrm{Pr}(\mathbbm{1} = 1) &\text{Definition of $\Delta$}\\
    &\ge \mathrm{Pr} \left(\forall_i\left(3m_q\Delta > 2m_q\Delta \right) \right)\mathrm{Pr}(\mathbbm{1} = 1) &w_j \le 1\\
    &= 1 * (\mathbbm{1} = 1) &\Delta > 0\\
    &= 1 - (\mathbbm{1} = 0)\\
    &\ge 1 - (m * m_q * (N - 1) * \frac{\delta}{2(N - 1)m_q} + \frac{\delta}{2m_q} * m_q) = 1 - \delta &\text{Union bound}
\end{align*}
\endgroup
and thus DESSERT solves the Vector Set Search Problem with this choice of $L$. Finally, we can now examine the expression for $L$ to determine its asymptotic behavior. Dropping the positive data dependent constants $\frac{1}{\gamma_{\max}}$, $\frac{1}{2\Delta^2}$, and $\beta^2$, the left term in the $\max$ for $L$ is $O(\log(\frac{Nm_qm}{\delta}))$ and the right term in the $\max$ is $O(\log(\frac{m_q}{\delta}))$, and thus $L = O\left(\log(\frac{Nm_qm}{\delta})\right)$.

\end{proof}

\runtime*
\begin{proof}
If we suppose that each call to the hash function $f_t$ is $O(d)$, the runtime of the algorithm is 
$$O\left(nLd + \sum_{i=0}^{n-1}\sum_{k=0}^{N-1}\sum_{t=0}^{L-1} |M_{k, t, f_t(q_j)}|\right)$$
To bound this quantity, we use the sparsity assumption we made in the theorem: no set $S_i$ contains too many elements that are very similar to a single query vector $q_j$. Formally, we require that 
$$|\mathcal{D}[i]_{t, h}| < T \quad \forall {i, t, h}$$
for some positive threshold $T$. With this assumption, the runtime of Algorithm $2$ is
$$O\left(m_q L d + m_q N LT\right)$$
Plugging in the $L$ we found in our previous theorem, and treating $T$ as  data dependent constant, we have that the runtime of Algorithm~\ref{alg:query} is
$$O\left(m_q \log(Nm_qm/\delta) d + m_q N \log(Nm_qm/\delta)\right)$$

which completes the proof. 

\end{proof}

\section{Hyperparameter Settings}
\label{appendix-hyper}
Settings for DESSERT corresponding to the first row in the left part of Table~\ref{tab:MSMARCO}, where DESSERT was optimized for returning $10$ documents in a low latency part of the Pareto frontier:

\begin{verbatim}
hashes_per_table (C) = 7
num_tables (L) = 32
filter_k = 4096
filter_probe = 1    
\end{verbatim}

Settings for DESSERT corresponding to the second row in the left part of Table~\ref{tab:MSMARCO}, where DESSERT was optimized for returning $10$ documents in a high latency part of the Pareto frontier:

\begin{verbatim}
hashes_per_table (C) = 7
num_tables (L) = 64
filter_k = 4096
filter_probe = 2
\end{verbatim}

Settings for DESSERT corresponding to the first row in the right part of Table~\ref{tab:MSMARCO}, where DESSERT was optimized for returning $1000$ documents in a low latency part of the Pareto frontier:

\begin{verbatim}
hashes_per_table (C)= 6
num_tables (L) = 32
filter_k = 8192
filter_probe = 4
\end{verbatim}

Settings for DESSERT corresponding to the second row in the right part of Table~\ref{tab:MSMARCO}, where DESSERT was optimized for returning $1000$ documents in a high latency part of the Pareto frontier:

\begin{verbatim}
hashes_per_table (C) = 7
num_tables (L) = 32
filter_k = 16384
filter_probe = 4
\end{verbatim}

Intuitively, these parameter settings make sense: increase the initial filtering size and the number of total hashes for higher accuracy, and increase the initial filtering size for returning more documents (1000 vs. 10).

\section{Background on Locality-Sensitive Hashing and Inverted Indices for Similarity Search}
\label{appendix-lsh}
Here, we offer a refresher on using locality-sensitive hashing for similarity search with a basic inverted index structure.

Consider a set of distinct vectors $X = \{x_1, \ldots, x_N\}$ where each $x_i \in \mathbb{R}^d$. A hash function $h$ with a range $m$ maps each $x_i$ to an integer in the range $[1, m]$. Two vectors $x_i$ and $x_j$ are said to "collide" when $h(x_i) = h(x_j)$.  

As a warmup, we will first consider the case of a hash function $h$ drawn from a set of universal hash functions $H$. Under such a function, if $i \ne j$, $p(h(x) = h(y)) = \frac{1}{m} $; such families exist in practice \cite{carter1977universal}. We can build an inverted index using this hash function by mapping each hash value in $[1, m]$ to the set of vectors $X_v$ that have this hash value. Then, given a new vector $y$, we can query the inverted index with $v = h(y)$. We can see that $y \in X$ iff $y \in X_v$. Such an index is in a sense solving a search problem, if we only care about finding exact duplicates of our search query. Additionally, we can solve the nearest neighbor problem with this index in time $O(N)$, by going to every bucket and checking the distance of a query against every vector in the bucket.

Now, in a similar way as in the universal case, let $h$ to be drawn from a family of locality-sensitive hash functions $H$. At a high level, instead of mapping vectors uniformly to $[1, m]$, $h$ maps vectors that are close together to the same hash value more often. Formally, if we define a "close" threshold $r_1$, a "far" threshold $r_2$, a "close" probability $p_1$, and a "far" probability $p_2$, with $p_1 > p_2$ and $r_1 < r_2$, then we say $H$ is $(r_1, r_2, p_1, p_2)$-sensitive if
\begin{align*}
    d(x, y) < r_1 \implies Pr(h(x) = h(y)) > p_1\\
    d(x, y) > r_2 \implies Pr(h(x) = h(y)) < p_2
\end{align*}
where $d$ is a distance metric. See \cite{10.1145/276698.276876} for the origin of locality-sensitive hashing and this definition. Intuitively, if we build an inverted index using $h$ in the same way as before, it now seems we have a strategy to solve the (approximate) nearest neighbor problem more efficiently: given a query $q$, only search for nearest neighbors in the bucket $h(q)$, since each of these points $x$ likely has $d(q, x) < r_1$. However, this strategy has a problem: with our definition, even a close neighbor might not be a collision with probability $(1 - p_1)$. Thus, we can repeat our inverted index $L$ times with different $h_i$ drawn independently from $H$, such that our probability of not finding a close neighbor in \textit{any} bucket is $(1 - p_1)^L$. FALCONN \cite{andoni2015practical} is an LSH inverted index algorithm that uses this basic idea, along with concatenation and probing tricks, to achieve an asymptotically optimal (and data-dependent sub-linear) runtime; see the paper and associated code repository for more details. 

One final note is that in practice, most LSH families satisfy a much stronger condition than the above. Consider a similarity function $\simwritten \in [0, 1]$, where $s(x, y) = 1 \implies x = y$. As $x$ and $y$ get more dissimilar (e.g. their distance increases according to some distance metric), $s(x, y)$ decreases. For most LSH families, there exists an explicit similarity function that their collision probability satisfies, such that
$p(h(x) = h(y)) = \simwritten(x, y)$.
Such LSH families exist for most common similarity functions, including cosine similarity (signed random projections)~\cite{10.1145/509907.509965}, Euclidean similarity ($p$-stable projections)~\cite{10.1145/997817.997857}, and Jaccard similarity (minhash or simhash)~\cite{10.5555/829502.830043}. Following \cite{engels2021practical,coleman2020sub,coleman2020race,luo2018arrays,lei2021fast,meisburger2020distributed}, in our work, we use LSH families with this explicit similarity description to provide tight analyses and strong guarantees for similarity-search algorithms. 

\end{document}